\documentclass[pdflatex,sn-mathphys-num,iicol]{sn-jnl} 
\usepackage{amsmath} 
\usepackage{amssymb}
\usepackage{graphicx} 
\usepackage{xcolor} 
\usepackage{footmisc}
\usepackage[ nameinlink, capitalise, noabbrev ]{cleveref} 
\hypersetup{ colorlinks=true, linkcolor=blue, citecolor=blue, urlcolor=blue } 
\usepackage{placeins} % to \clearpage without starting a new page - with \FloatBarrier

\usepackage{ulem} %needed for deleted and added commands 
\usepackage{comment}
\newcommand{\vs}[0]{\mkern1mu} % very small space in math mode - vector space between vector and ) or |

\usepackage{svg}
\usepackage{tikz}
\usetikzlibrary{angles,quotes,calc,shapes.geometric, arrows.meta, positioning }

\newtheorem{proposition}{Proposition}

\newcommand{\Es}{E^{*}}          % feature-enhanced field
\newcommand{\xv}{\vec{x}}
\newcommand{\yv}{\vec{y}}

\begin{document} 
\title[Why the Multi-Sphere Shape Generator Works]{Why the Multi-Sphere Shape Generator Works}
\subtitle{Medial-Axis Placement of Spheres} 
\author*[1]{\fnm{Arash} \sur{Moradian}
%\email{arash.moradian@fau.de}
} 
\author[1]{\fnm{Felix} \sur{Buchele}} 
\author[1]{\fnm{Thorsten} \sur{P\"oschel}
%\orcid{0000-0003-1285-3863} 
}
\affil[1]{\orgdiv{Lehrstuhl f\"ur Multiskalensimulation}, 
\orgname{Friedrich-Alexander-Universit\"at Erlangen-N\"urnberg}, 
\country{Germany}} 
\abstract{The Multi-Sphere Shape Generator (MSS) \cite{buchele2026mss} places spheres according to a feature-enhanced residual field, but the geometric basis of this strategy has remained unknown. We prove that every local maximum of the residual field lies on the medial axis of the target shape, implying that MSS places spheres at the centers of maximal inscribed spheres without explicit skeleton extraction. Numerical tests show that deviations from the exact medial axis are limited to the voxel resolution. This result provides a mathematical explanation for the placement strategy that underlies the accuracy of MSS.}

\keywords{discrete element method, multi-sphere approximation, medial axis, sphere placement, distance transform}
\maketitle 

\section{Introduction} 
\label{sec:intro} 

%\begin{itemize}
%    \item Why Multi-Sphere Approximation?
%    \item Why is good sphere placement important?
%    \item ??? Previous methods
%    \begin{itemize}
%        \item clustering
%        \item medial-axis extraction
%        \item optimization
%    \end{itemize}
%    \item MSS
%    \item \textbf{Open question: Why does MSS place spheres so well?}
%\end{itemize}

Multi-sphere particle models were introduced \cite{poschel1993static, Buchholtz.1994} as an efficient way to represent non-spherical particles while retaining the simplicity of sphere-sphere contact detection \cite{Kruggel:2008, Markauskas:2010}. Since then, multi-sphere models have become a standard approach in DEM, and numerous methods have been proposed for their automatic generation \cite{Favier:1999, Ferellec:2008, Lu:2015, angelidakis2021clump}. Their accuracy depends not only on the number of spheres used, but also on how effectively these spheres are positioned within the particle. Consequently, the development of algorithms for automatic sphere placement has received considerable attention.

Existing generators follow different strategies, including clustering-based methods
\cite{Garcia.2009},
medial-axis-based methods
\cite{Yuan:2019},
and optimization approaches
\cite{Li:2015}.
Medial-axis methods are particularly attractive because the centers of maximal inscribed spheres lie on the medial axis of the object \cite{blum1967transformation, siddiqi2008medial}, providing a compact geometric description of the particle shape. Accordingly, medial-axis and skeleton extraction have been used for the generation of multi-sphere models \cite{Yuan:2019}. However, explicit computation of the medial axis can be computationally demanding and is sensitive to discretization and boundary perturbations \cite{Attali:2009, siddiqi2008medial}.

The Multi-Sphere Shape Generator (MSS), introduced in \cite{buchele2026mss}, is a greedy algorithm that places spheres according to a feature-enhanced residual field. Without explicitly extracting the medial axis, MSS produces accurate multi-sphere approximations using comparatively few spheres. While its performance has been demonstrated numerically \cite{buchele2026mss, gemss2026, Buchele.2026}, the geometric basis of its sphere-placement strategy has not yet been established.

In this paper, we provide the missing theoretical explanation. We prove that every local maximum of the residual field used by MSS lies on the medial axis of the target shape in the continuum limit. Consequently, MSS places sphere centers at the centers of maximal inscribed spheres without explicit medial-axis extraction. Numerical experiments confirm that deviations from the exact medial axis are solely due to voxel discretization. 
%\deleted{These results provide the first mathematical explanation for the sphere-placement strategy underlying MSS.}

\section{Multi-Sphere Shape Generator - MSS}
\label{sec:mss}

%explain MSS
%\subsection{Distance-transform-based sphere placement}
%\begin{itemize}
%   \item EDT
%   \item sphere radius
%   \item greedy
%\end{itemize}

%\subsection{Residual field of MSS}
%\begin{equation}
%    \hat{E}^\ast=2\hat{E}-\hat{\bar{E}}
%\end{equation}

%\cref{fig:fields}: Comparison of driving fields used for greedy sphere placement.
%\begin{itemize}
%    \item (a) Euclidean distance field.
%    \item (b) Conventional residual field, whose maximum shifts toward the remaining uncovered boundary.
%    \item (c) MSS feature-enhanced residual field, whose maximum is located near the center of the remaining %feature.
%\end{itemize}

The multi-sphere approach represents a target shape $\mathcal{S}$ by the union
of $n$ spheres,
\begin{equation}
    \tilde{\mathcal{S}}
    =
    \bigcup_{i=0}^{n-1}
    \tilde{\mathcal{S}}_i\left(\xv_i,R_i\right),
    \label{eq:multisphere}
\end{equation}
where $\xv_i$ and $R_i$ denote the center and radius of sphere $i$,
respectively.

\subsection{Inscribed spheres and the Euclidean Distance Transform}
\label{sec:distance}

For $\xv\in\mathcal{S}$, the Euclidean Distance Transform
\begin{equation}
    E\left(\xv\right)
    =
    \min_{\yv\in\partial\mathcal{S}}
    \left|\xv-\yv\vs\right|
    \label{eq:edt}
\end{equation}
gives the distance to the boundary $\partial\mathcal{S}$. Hence, $E\left(\xv\right)$ is
the radius of the largest sphere centered at $\xv$ that is contained in
$\mathcal{S}$. Once a sphere center $\xv_i$ has been selected, its radius is
therefore
\begin{equation}
    R_i = E\left(\xv_i\right)\,.
    \label{eq:radius}
\end{equation}
Constructing the multi-sphere model thus reduces to selecting the sphere
centers $\xv_i$.

\subsection{Residual field and sphere placement}
\label{sec:residual}

MSS selects these centers iteratively. Let $\tilde E\left(\xv\,\right)$ denote the Euclidean Distance Transform of the current multi-sphere approximation $\tilde{\mathcal{S}}$. MSS defines the residual field
\begin{equation}
    E^\ast\left(\xv\vs\right)
    =
    2E\left(\xv\vs\right)-\tilde E\left(\xv\vs\right)
    \label{eq:mss_residual}
\end{equation}
and places the next sphere at a maximum of $E^\ast$. Its radius is then given by \cref{eq:radius}. The newly placed sphere is added to $\tilde{\mathcal S}$, the residual field is updated, and the procedure is repeated until a prescribed stopping criterion is reached. For a detailed description of the MSS algorithm, we refer to \cite{buchele2026mss}.

The effect of this residual field is illustrated in \cref{fig:fields} for a simple two-disk example. After the first disk has been represented, the maximum of a conventional masked-residual distance field \cite{angelidakis2021clump} shifts away from the center of the remaining disk, whereas the maximum of the MSS residual field remains at its center. The central question of this work is whether this behavior is specific to this simple example or follows from the structure of the MSS residual field in general.

\begin{figure}[htb]
  \centering
  \includegraphics[width=0.9\columnwidth]{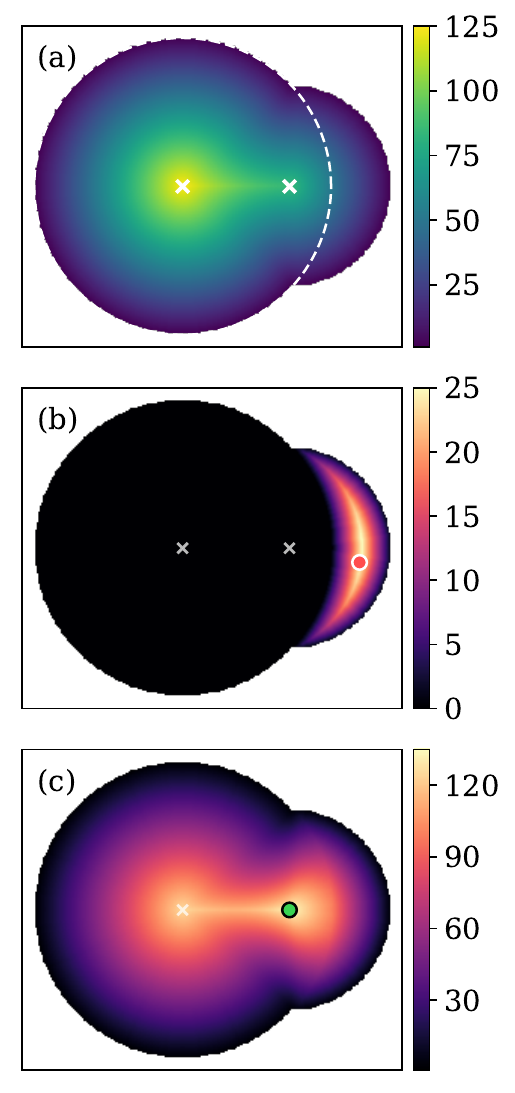}
  \caption{Comparison of driving fields for greedy sphere placement in the
  two-disk example.
  (a) Euclidean Distance Field $E$.
  (b) Conventional masked-residual field, whose maximum shifts away from the
  center of the remaining feature.
  (c) MSS residual field $2E-\tilde E$, whose maximum remains at the center of
  the remaining feature.}
  \label{fig:fields}
\end{figure}

%\section{Why Medial-Axis Placement Matters}
%\textbf{Why is the medial axis of any interest?}
%\begin{itemize}
%    \item maximum inscribed spheres
%    \item centers of maximal inscribed spheres lie on the medial axis
%    \item off-axis spheres are never maximal
%    \item off-axis spheres can always be replaced by a larger sphere
%    \item efficient multi-sphere approximations therefore require sphere centers close to the medial axis
%\end{itemize}
%\textbf{Open question: Does MSS place sphere centers on the medial axis?} = End of section

\section{Why medial-axis placement matters}
\label{sec:medial-axes}

%MSS does not explicitly construct the medial axis. Nevertheless, the two-disk example in \cref{fig:fields} suggests that the maxima of its residual field may lie on the medial axis. In the following, we show that this is indeed a general property of MSS in the continuum limit.

The medial axis of a shape $\mathcal S$ is the locus of the centers of its maximal inscribed spheres \cite{blum1967transformation, siddiqi2008medial}. An inscribed sphere is maximal if it is not strictly contained in any other inscribed sphere. Hence, every inscribed sphere whose center lies outside the medial axis is contained in a larger inscribed sphere centered on the medial axis.

This property is particularly relevant to multi-sphere approximation. An off-axis sphere is geometrically redundant in the sense that the region it covers is also covered by a larger inscribed sphere centered on the medial axis. Medial-axis placement therefore provides a geometrically efficient choice of sphere centers without restricting the region that can be represented.

The consequence for sphere placement is illustrated by the pill-shaped target in \cref{fig:pill}. With conventional residual-based placement, the remaining uncovered regions lead to two additional off-axis spheres. In contrast, MSS places a single sphere on the medial axis, where the largest inscribed sphere simultaneously covers both regions.
\begin{figure}[htb]
    \centering
    \includegraphics[width=0.9\columnwidth]{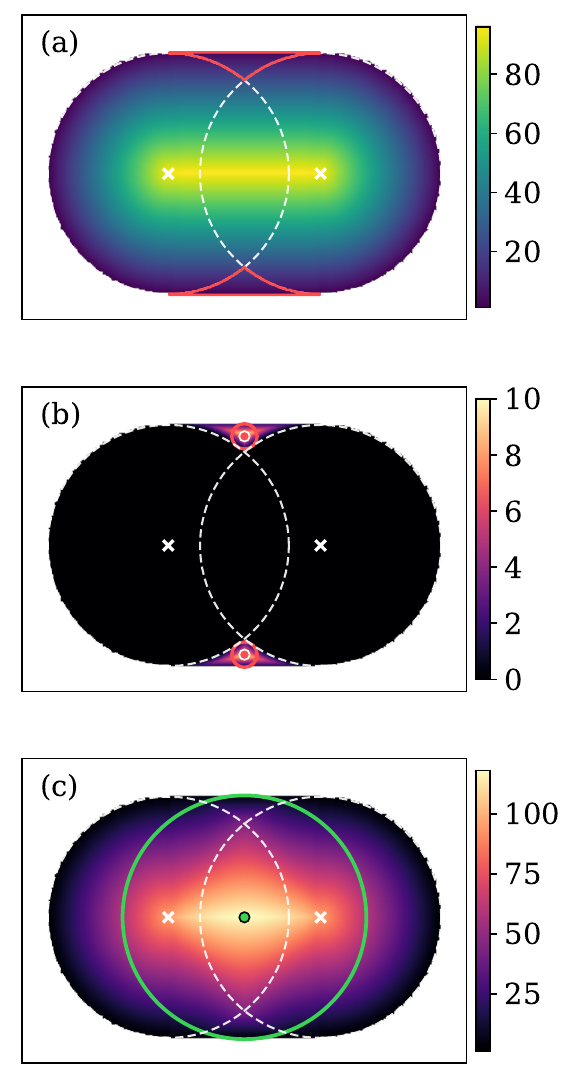}
    \caption{Residual fields on a pill-shaped target. (a)~$E$.
  (b)~Distance transform of the masked residual $S\setminus\tilde S$: separate maxima at each gap require two additional spheres. (c)~Feature-enhanced field $\Es$: maximum (green) falls on the medial axis, covering both gaps with a single sphere.}
  
    \label{fig:pill}
\end{figure}

MSS does not explicitly construct the medial axis. Nevertheless, the two-disk example in \cref{fig:fields} and the pill-shaped example in \cref{fig:pill} suggest that the maxima of its residual field may lie on the medial axis. In the following, we show that this is indeed a general property of MSS in the continuum limit.

%\section{Medial-Axis Placement of MSS}
%\subsection{Preliminaries}
%\begin{itemize}
%    \item Euclidean Distance Field
%    \item Lipschitz continuit
%    \item gneralized Clarke gradient
%\end{itemize}

%\subsection{??? theorem}
%\textbf{Every local maximum of the MSS residual field lies on the medial axis.}
%proof

%\subsection{Implications}
%\begin{itemize}
%    \item MSS does not explicitly compute the medial axis.
%    \item Nevertheless, every newly inserted sphere is centered on the medial axis.
%    \item The theorem explains the excellent sphere placement observed in practice.
%\end{itemize}

\section{Medial-axis placement of MSS}
\label{sec:theory}

We now show that the sphere-placement rule of MSS necessarily selects medial-axis locations in the continuum limit. The argument relies on two properties of Euclidean Distance Fields. Away from the medial axis, the distance field
$E$ is differentiable and satisfies the eikonal equation
\begin{equation}
    \left|\nabla E\left(\xv\vs\right)\right|=1\,.
    \label{eq:eikonal}
\end{equation}
Moreover, the distance field $\tilde E$ of the current multi-sphere approximation is $1$-Lipschitz,
\begin{equation}
    \left|\tilde E\left(\xv\vs\right)
    -\tilde E\left(\yv\vs\right)\right|
    \leq
    \left|\xv-\yv\vs\right|,
    \label{eq:lipschitz}
\end{equation}
and therefore, wherever it is differentiable,
\begin{equation}
    \left|\nabla\tilde E\left(\xv\vs\right)\right|\leq 1\,.
    \label{eq:gradient_bound}
\end{equation}

These two properties already reveal the mechanism underlying MSS. At any point away from the medial axis where both distance fields are differentiable, the gradient of the residual field
\begin{equation}
    E^\ast\left(\xv\vs\right)
    =
    2E\left(\xv\vs\right)-\tilde E\left(\xv\vs\right)
\end{equation}
is
\begin{equation}
    \nabla E^\ast
    =
    2\nabla E-\nabla\tilde E\,.
    \label{eq:residual_gradient}
\end{equation}
The reverse triangle inequality together with \cref{eq:eikonal,eq:gradient_bound} gives
\begin{equation}
    \left|\nabla E^\ast\right|
    \geq
    2\left|\nabla E\right|
    -\left|\nabla\tilde E\right|
    \geq 1\,.
    \label{eq:residual_gradient_bound}
\end{equation}
Hence, $E^\ast$ has no stationary point away from the medial axis wherever both distance fields are differentiable.

To include non-differentiable points, we use the Clarke generalized gradient $\partial_C f$ \cite{clarke1990optimization}, defined as the convex hull of the limiting gradients at nearby differentiable points,
\begin{equation}
\begin{aligned}
    \partial_C f\left(\xv\vs\right)
    &=
    \operatorname{co}
    \left\{
        \lim_{k\to\infty}\nabla f\left(\xv_k\right)
        :
        \xv_k\to\xv,\;
    \right.
    \\
    &\qquad\qquad
        \left.
        f\text{ differentiable at }\xv_k
        \vphantom{\lim_{k\to\infty}\nabla f\left(\xv_k\right)}
    \right\},
\end{aligned}
\label{eq:clarke}
\end{equation}
where $\operatorname{co}$ denotes the convex hull.

\begin{proposition}
\label{prop:medial}
Every local maximum of the MSS residual field
$E^\ast=2E-\tilde E$ in the interior of $\mathcal S$ lies on the medial axis
of $\mathcal S$.
\end{proposition}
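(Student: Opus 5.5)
We argue by contraposition: take an interior point $\xv_0\in\mathcal S$ that does \emph{not} lie on the medial axis, and show that $E^\ast$ strictly increases along some direction from $\xv_0$, so $\xv_0$ cannot be a local maximum. We use the standard fact that $\mathcal S$ minus the medial axis is open (we take the closure of the medial axis if necessary). Hence there is a neighbourhood $U$ of $\xv_0$ on which $\xv_0$ has a unique nearest boundary point and $E$ is continuously differentiable, with $|\nabla E|=1$ by \cref{eq:eikonal}. Set $\vec n=\nabla E(\xv_0)$; this unit vector points away from the nearest boundary point $\yv_0$, i.e.\ $\xv_0=\yv_0+E(\xv_0)\,\vec n$.

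The cleanest route avoids gradients of $\tilde E$ altogether and uses only the Lipschitz bound \cref{eq:lipschitz}. Along the ray $\xv_t=\xv_0+t\vec n$ for small $t>0$, the segment from $\yv_0$ to $\xv_t$ stays on the normal ray. Since $\xv_0$ is off the medial axis, the inscribed ball at $\xv_0$ is not maximal and extends along this ray. Concretely, $E(\xv_t)=E(\xv_0)+t$ for $t$ small: the upper bound follows from 1-Lipschitz continuity of $E$, and the lower bound from the differentiability of $E$ along the ray together with $\vec n\cdot\nabla E=1$ and the fact that $\nabla E$ is constant along straight normal segments off the medial axis. Meanwhile \cref{eq:lipschitz} gives $\tilde E(\xv_t)\le\tilde E(\xv_0)+t$. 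Therefore
\begin{equation*}
E^\ast(\xv_t)-E^\ast(\xv_0)\;\ge\;2t-t\;=\;t\;>\;0,
\end{equation*}
which contradicts local maximality. This is the Lipschitz version of \cref{eq:residual_gradient_bound}.

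As a cross-check in the language the paper sets up, the same conclusion follows from the Clarke calculus. At a local maximum, $0\in\partial_C E^\ast(\xv_0)\subseteq 2\,\partial_C E(\xv_0)-\partial_C\tilde E(\xv_0)$. Off the medial axis, $\partial_C E(\xv_0)=\{\vec n\}$, while every element of $\partial_C\tilde E$ has norm at most $1$ by \cref{eq:gradient_bound} and convexity. So $0=2\vec n-\vec g$ would require $|\vec g|=2$, which is impossible. This handles the points where $\tilde E$ is non-differentiable, which is exactly the gap left by the argument around \cref{eq:residual_gradient_bound}.

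The main obstacle is the regularity of $E$ near $\xv_0$. We must justify that off the (closed) medial axis $E$ is $C^1$ with unit gradient in a full neighbourhood, or at least that $E$ increases at exactly unit rate along the normal ray. This is where the definition of the medial axis (non-maximal inscribed balls and a unique foot point) enters. One subtlety deserves explicit mention: points with a unique nearest boundary point whose ball is nevertheless maximal, such as limit points or centres of curvature, belong to the medial axis by definition. Consequently, the ray argument only needs a non-maximal ball, and such a ball can be enlarged by moving its centre along $\vec n$.
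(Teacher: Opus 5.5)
Your argument is correct, and its main line differs from the paper's proof, whereas your Clarke ``cross-check'' is essentially the paper's proof. The paper argues by contradiction with nonsmooth calculus. At an off-axis local maximum $\xv^\ast$ it uses $|\nabla E(\xv^\ast)|=1$ and $|\vec p\,|\le 1$ for $\vec p\in\partial_C\tilde E(\xv^\ast)$. From these it shows that every element $2\nabla E(\xv^\ast)-\vec p$ of $\partial_C E^\ast(\xv^\ast)$ has norm at least $1$, which contradicts Clarke's condition $0\in\partial_C E^\ast(\xv^\ast)$. You instead exhibit an explicit ascent direction $\vec n=\nabla E(\xv_0)$, along which $E$ grows at unit rate while $\tilde E$ grows at most at unit rate. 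Your route has three advantages. It needs neither the Clarke sum rule nor the necessary condition. It uses nothing about $\tilde E$ beyond its Lipschitz bound on one segment. It is also quantitative: $E^\ast(\xv_0+t\vec n)\ge E^\ast(\xv_0)+t$ holds all the way to the centre of the enlarged ball, so $E^\ast$ increases toward the medial axis along normal rays, which says more than the absence of a local maximum. The paper's route is shorter and mirrors the gradient heuristic stated before the proposition. However, its inclusion for $\partial_C E^\ast(\xv^\ast)$ tacitly requires $\partial_C E(\xv^\ast)=\{\nabla E(\xv^\ast)\}$, i.e.\ strict differentiability of $E$ at $\xv^\ast$. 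This holds because $E$ is locally semiconcave in the interior of $\mathcal S$, but pointwise differentiability alone would not give it. Your cross-check has the same requirement and takes it from the neighbourhood $U$.

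Two steps should be tightened.

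First, drop the claim that $\mathcal S$ minus the medial axis is open. For general boundaries the set of centres of maximal balls need not be closed. Replacing it by its closure would only prove that local maxima lie in the \emph{closure} of the medial axis, which is weaker than the statement. Your ray argument never uses $U$, so nothing is lost.

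Second, deriving $E(\xv_t)\ge E(\xv_0)+t$ from ``$\nabla E$ is constant along normal segments'' is circular for $t>0$. The fact that $\xv_t$ keeps the foot point $\yv_0$ is precisely what has to be shown. Use the non-maximality you invoke at the end, and prove that the enlargement is along $\vec n$:
\begin{itemize}
\item Let $r_0=E(\xv_0)$ and take an inscribed ball $B(\xv',r')\supsetneq B(\xv_0,r_0)$. Containment gives $|\xv_0-\xv'|\le r'-r_0$.
\item The open ball $B(\xv',r')\subseteq\operatorname{int}\mathcal S$ cannot contain $\yv_0\in\partial\mathcal S$. Hence $r'\le|\yv_0-\xv'|\le|\yv_0-\xv_0|+|\xv_0-\xv'|\le r'$.
\item Equality forces $\xv'=\xv_0+(r'-r_0)\vec n$ with $r'>r_0$.
\item Then $B(\xv_t,r_0+t)\subseteq B(\xv',r')$ for $0\le t\le r'-r_0$, which gives $E(\xv_t)\ge r_0+t$.
\end{itemize}
Alternatively, differentiability of $E$ at the single point $\xv_0$ already gives $E(\xv_t)=E(\xv_0)+t+o(t)$. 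Hence $E^\ast(\xv_t)\ge E^\ast(\xv_0)+t+o(t)>E^\ast(\xv_0)$ for small $t>0$. The upper bound on $E(\xv_t)$ is never needed.
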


\begin{proof}
Suppose that $\xv^\ast$ is a local maximum of $E^\ast$ that does not lie on
the medial axis. Then $E$ is differentiable at $\xv^\ast$ and
\begin{equation}
    \left|\nabla E\left(\xv^\ast\right)\right|=1\,.
    \label{eq:proof_eikonal}
\end{equation}
Since $\tilde E$ is a distance function, it is $1$-Lipschitz, every element of its Clarke generalized gradient, $\vec{p}\in\partial_C\tilde E\left(\xv^\ast\right)$, satisfies
\begin{equation}
    \left|\vec{p}\vs\vs\right|\leq 1\,.
    \label{eq:proof_p_bound}
\end{equation}
The Clarke generalized gradient of the residual field satisfies
\begin{equation}
    \partial_C E^\ast\left(\xv^\ast\right)
    \subseteq
    \left\{
        2\nabla E\left(\xv^\ast\right)-\vec{p}
        :
        \vec{p}\in
        \partial_C\tilde E\left(\xv^\ast\right)
    \right\}.
    \label{eq:proof_clarke_residual}
\end{equation}
For every such $\vec{p}$,
\begin{equation}
    \left|
        2\nabla E\left(\xv^\ast\right)-\vec{p}\vs
    \right|
    \geq
    2\left|\nabla E\left(\xv^\ast\right)\right|
    -\left|\vec{p}\vs\right|
    \geq 1\,.
    \label{eq:proof_bound}
\end{equation}
Therefore,
\begin{equation}
    0\notin\partial_C E^\ast\left(\xv^\ast\right).
\end{equation}
This contradicts the Clarke necessary condition
\begin{equation}
    0\in\partial_C E^\ast\left(\xv^\ast\right)
\end{equation}
for a local maximum of a locally Lipschitz function \cite{clarke1990optimization}. Hence,
$\xv^\ast$ must lie on the medial axis.
\end{proof}

\section{Numerical verification}
\label{sec:verification}
%\subsection{box}
%\subsection{Accuracy of medial-axis placement}
%?????

The result of \cref{prop:medial} applies to continuous distance fields, whereas MSS operates on a voxelized representation of the target shape. Discretization may therefore cause the computed sphere centers to deviate from the exact medial axis. We quantify this deviation for a shape whose medial axis is known analytically and examine its dependence on the voxel size.

The test shape is an axis-aligned box 
\begin{equation}
    \mathcal{S}
    =
    [-a,a]\times[-b,b]\times[-c,c]
\end{equation}
with side lengths $2a=30$, $2b=20$, and $2c=10$. For a box, the medial axis consists of the points at which at least two nearest faces are equidistant. It can therefore be determined analytically, independently of the voxelization or any numerical medial-axis extraction computation.

For each sphere center $\xv_i$ generated by MSS, we determine its shortest distance $d_i$ from the analytical medial axis $\mathcal{M}$. 
\begin{equation}
    d_i
    =
    \min_{\yv\in\mathcal{M}}
    \left|\xv_i-\yv\vs\right|\,.
    \label{eq:medial_distance}
\end{equation}
\begin{figure}[htb]
    \centering
    \includegraphics[width=0.9\columnwidth]{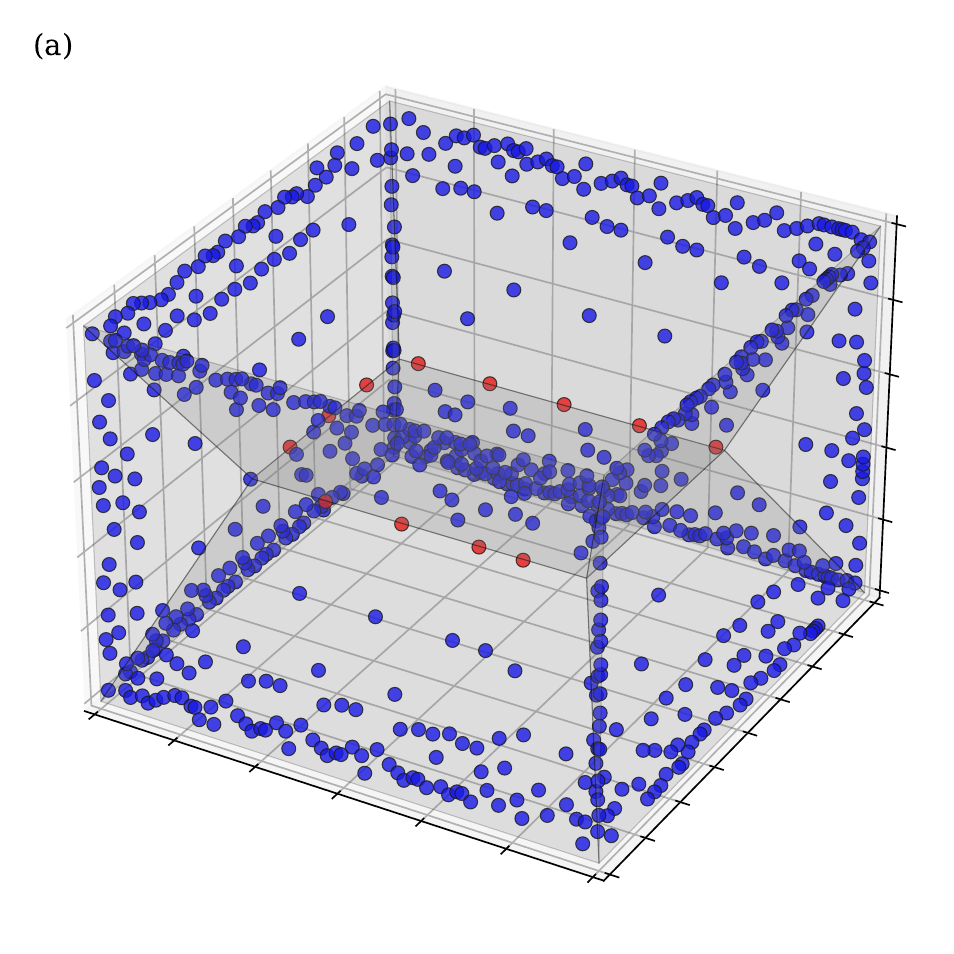}
    \includegraphics[width=0.9\columnwidth]{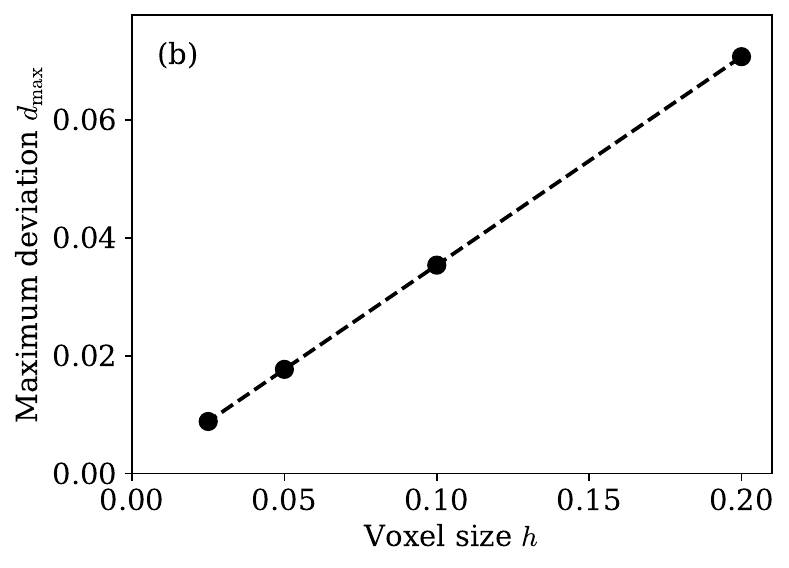}
    \caption{Numerical verification of medial-axis placement for an axis-aligned box. (a) MSS sphere centers relative to the analytical medial axis at voxel size $h=0.1$. The blue symbols mark sphere centers that are on the medial axis, within numerical tolerance. The red symbols mark particle centers that are $0.345h$ away. (b) Maximum deviation $d_{\max}$ from the analytical medial axis as a function of the voxel size $h$. The dashed line indicates linear scaling, $d_{\max}\propto h$.
    }
    \label{fig:box}
\end{figure}

\Cref{fig:box}a shows the sphere centers relative to the analytical
medial axis for a representative voxel size $h=0.1$. At this resolution, $97.2\%$ of the sphere centers coincide with the analytical medial axis within a numerical tolerance of $10^{-9}$, while the maximum deviation is $0.354\,h$.

To examine the effect of spatial discretization, the calculation is repeated for voxel sizes $h\in \{0.2, 0.1, 0.05, 0.025\}$, while keeping the dimensions of the box and the remaining MSS parameters unchanged. For each resolution, we determine the maximum deviation
\begin{equation}
    d_{\max}
    =
    \max_i d_i\,.
    \label{eq:dmax}
\end{equation}
As shown in \cref{fig:box}b, $d_{\max}$ decreases approximately linearly with the voxel size, 
\begin{equation}
    d_{\max}\propto h\,.
\end{equation}
Thus, the maximum deviation tends to zero under grid refinement, consistent with the continuum result of \cref{prop:medial}.

\section{Conclusion}
\label{sec:conclusion}

The result of \cref{prop:medial} explains a central feature of the sphere-placement strategy used by MSS. Although the algorithm does not construct the medial axis explicitly, the maxima of its residual field are restricted to the medial axis in the continuum limit. MSS therefore obtains the principal advantage of medial axis-based sphere placement without requiring a separate medial-axis extraction step.

The origin of this property lies in the particular residual field used by MSS. Away from the medial axis, the distance field of the target shape satisfies $\left|\nabla E\right|=1$, whereas the gradient magnitude of the distance field of the current multi-sphere approximation cannot exceed unity. The factor of two in $E^\ast=2E-\tilde E$ therefore prevents the residual field from having a local maximum away from the medial axis. 
This provides a mathematical explanation for the sphere-placement rule introduced with MSS.

The numerical results show how this continuum property is approached in the voxelized implementation. The deviations from the analytical medial axis decrease with the voxel size and tend to zero under grid refinement. The observed off-axis displacements are therefore consistent with spatial discretization rather than an intrinsic limitation of the MSS placement rule.

The present analysis is deliberately restricted to the residual field used by MSS. The underlying argument can be extended to a broader class of residual fields and, more generally, to functions satisfying the eikonal equation. These generalizations are considered separately in \cite{Arash:2026}.

\section*{Statements and Declarations} 

\bmhead{Funding} This research was funded by the Deutsche Forschungsgemeinschaft (DFG, German Research Foundation) – 377472739/GRK 2423/1-2019. The authors are very grateful for this support.

%\bmhead{Acknowledgements} \TP{whom to acknowledge?}

\bmhead{Competing Interests} The authors declare that they have no competing interests. 

\bibliography{MSS-math} 

@article{Ferellec:2008,
author = {Ferellec, J.-F. and McDowell, G. R.},
title = {A simple method to create complex particle shapes for {DEM}},
journal = {Geomechanics and Geoengineering},
volume = {3},
pages = {211--216},
year = {2008},
doi = {10.1080/17486020802253992},
}

@article{Attali:2009,
  author    = {Attali, Dominique and Boissonnat, Jean-Daniel and Edelsbrunner, Herbert},
  title     = {Stability and Computation of Medial Axes---{A} State-of-the-Art Report},
  journal   = {Mathematical Foundations of Scientific Visualization, Computer Graphics, and Massive Data Exploration},
  year      = {2009},
  pages     = {109--125},
  doi       = {10.1007/978-3-642-01538-$5_6$}
}

@article{Li:2015,
author = {Li, Cheng-Qing and Xu, Wen-Jie and Meng, Qing-Shan},
title = {Multi-sphere approximation of real particles for {DEM} simulation based on a modified greedy heuristic algorithm},
journal = {Powder Technology},
volume = {286},
pages = {478-487},
year = {2015},
doi = {10.1016/j.powtec.2015.08.026}
}

@article{Garcia.2009,
  author = {Garcia, X. and Latham, J.-P. and Xiang, J. and Harrison, J. P.},
  title = {A clustered overlapping sphere algorithm to represent real particles in discrete element modelling},
  journal = {G\'eotechnique},
  volume = {59},
  pages = {779--784},
  year = {2009},
  doi = {10.1680/geot.8.T.037}
}

@article{Yuan:2019,
author = {Yuan, Fei-Liang},
doi = {10.1007/s10035-019-0874-x}, 
year = {2019}, 
volume = {21}, 
title = {Combined 3D thinning and greedy algorithm to approximate realistic particles with corrected mechanical properties}, 
journal = {Granular Matter} 
}

@article{Favier:1999,
author = {Favier, J. F. and Abbaspour‐Fard, M. H. and Kremmer, M. and Raji, A. O.},
title = {Shape representation of axi‐symmetrical, non‐spherical particles in discrete element simulation using multi‐element model particles},
journal = {Engineering Computations},
volume = {16},
pages = {467-480},
year = {1999},
doi = {10.1108/02644409910271894}
}

@article{Markauskas:2010,
author = {Markauskas, D. and Ka{\v{c}}ianauskas, R. and D{\v{z}}iugys, A. and Navakas, R.},
title = {Investigation of adequacy of multi-sphere approximation of elliptical particles for {DEM} simulations},
journal = {Granular Matter},
year = {2010},
volume = {12},
page = {107-123},
doi= {10.1007/s10035-009-0158-y}
}

@article{Lu:2015,
author = {Lu, G. and Third, J. R. and M\"uller, C. R.},
title = {Discrete element models for non-spherical particle systems: {F}rom theoretical developments to applications},
journal = {Chemical Engineering Science},
volume = {127},
pages = {425-465},
year = {2015},
doi = {10.1016/j.ces.2014.11.050}
}

@article{Kruggel:2008,
title = {A study on the validity of the multi-sphere {D}iscrete {E}lement {M}ethod},
author = {Kruggel-Emden, H. and Rickelt, S. and Wirtz, S. and Scherer, V.},
journal = {Powder Technology},
volume = {188},
pages = {153-165},
year = {2008},
doi = {10.1016/j.powtec.2008.04.037}
}

@article{buchele2026mss,
  author  = {Buchele, Felix and P{\"o}schel, Thorsten and M{\"u}ller, Patric},
  title   = {Multi-sphere shape generator for {DEM} simulations of complex-shaped particles},
  journal = {Powder Technology},
  volume  = {480},
  pages   = {122635},
  year    = {2026},
  doi     = {10.1016/j.powtec.2026.122635}
}

@article{poschel1993static,
  author  = {P{\"o}schel, Thorsten and Buchholtz, Volkhard},
  title   = {Static friction phenomena in granular materials: {C}oulomb law versus particle geometry},
  journal = {Physical Review Letters},
  volume  = {71},
  xnumber  = {24},
  pages   = {3963--3966},
  year    = {1993},
  doi     = {10.1103/PhysRevLett.71.3963}
}

@article{Buchholtz.1994,
author = {Buchholtz, Volkhard and Pöschel, Thorsten},
 doi = {10.1016/0378-4371(94)90467-7},
 journal = {Physica A-Statistical Mechanics and Its Applications},
 pages = {390-401},
 title = {Numerical investigations of the evolution of sandpiles},
 volume = {202},
 year = {1994},
}

@article{angelidakis2021clump,
  author  = {Angelidakis, Vasileios and Nadimi, Sadegh and Otsubo, Masahide and Utili, Stefano},
  title   = {{CLUMP}: A code library to generate universal multi-sphere particles},
  journal = {SoftwareX},
  volume  = {15},
  pages   = {100735},
  year    = {2021},
  doi     = {10.1016/j.softx.2021.100735}
}

@incollection{blum1967transformation,
  author    = {Blum, Harry},
  title     = {A transformation for extracting new descriptors of shape},
  booktitle = {Models for the Perception of Speech and Visual Form},
  editor    = {Wathen-Dunn, Weiant},
  pages     = {362--380},
  address = {Cambridge, USA},
  publisher = {MIT Press},
  pages = {362-380},
  year      = {1967}
}

@book{siddiqi2008medial,
  editor    = {Siddiqi, Kaleem and Pizer, Stephen M.},
  title     = {Medial Representations: Mathematics, Algorithms and Applications},
  publisher = {Springer},
  address = {Dortrecht},
  series    = {Computational Imaging and Vision},
  volume    = {37},
  doi       = {10.1007/978-1-4,020-8658-8},
  year      = {2008}
}

@book{clarke1990optimization,
  author    = {Clarke, Frank H.},
  title     = {Optimization and Nonsmooth Analysis},
  publisher = {Society for Industrial and Applied Mathematics (SIAM)},
  address   = {Philadelphia},
  year      = {1990},
  series    = {Classics in Applied Mathematics},
  volume    = {5},
  doi       = {10.1137/1.9781611971309}
}

@article{Arash:2026,
    author = {Moradian, Arash and P{\"o}schel, Thorsten},
    title = {Residual Fields of Eikonal Functions Recover the Medial Axis},
    journal = {preprint},
    year = {2026}
}

@misc{gemss2026,
  author       = {Moradian, Arash and Buchele, Felix and M{\"u}ller, Patric and P{\"o}schel, Thorsten},
  title        = {{GEMSS}: {GE}nerator of {M}ulti-{S}phere {S}hapes},
  howpublished = {\url{https://github.com/aqa-arash/GEMSS}},
  year         = {2026}
}

@article{Buchele.2026,
    title = {multisphere: a Python implementation of the Multi Sphere Shape generator (MSS) for DEM simulations},
    journal = {SoftwareX},
    volume = {35},
    pages = {102836},
    year = {2026},
    issn = {2352-7110},
    doi = {10.1016/j.softx.2026.102836},
    author = {Buchele, Felix and Müller, Patric and Moradian, Arash and Pöschel, Thorsten},
}

\end{document}